\DeclarePairedDelimiter\ceil{\lceil}{\rceil}
\newcolumntype{Y}{>{\centering\arraybackslash}X}
\theoremstyle{thmstyleone}%
\newtheorem{theorem}{Theorem}
\newtheorem{proposition}[theorem]{Proposition}%
\theoremstyle{thmstyletwo}%
\theoremstyle{thmstylethree}%
\begin{document}


\title[Article Title]{
A Cholesky decomposition-based asset selection heuristic for sparse tangent portfolio optimization
}




\author[1]{\fnm{Hyunglip} \sur{Bae}}\email{qogudflq@kaist.ac.kr}
\equalcont{These authors contributed equally to this work.}

\author[1]{\fnm{Haeun} \sur{Jeon}}\email{haeun39@kaist.ac.kr}
\equalcont{These authors contributed equally to this work.}

\author[2]{\fnm{Minsu} \sur{Park}}\email{mspark0425@kaist.ac.kr}

\author*[3, 4]{\fnm{Yongjae} \sur{Lee}}\email{yongjaelee@unist.ac.kr}

\author*[1]{\fnm{Woo Chang} \sur{Kim}}\email{wkim@kaist.ac.kr}

\affil[1]{\orgdiv{Department of Industrial and Systems Engineering}, \orgname{KAIST}}

\affil[2]{\orgdiv{Graduate School of Data Science}, \orgname{KAIST}}

\affil[3]{\orgdiv{Department of Industrial Engineering}, \orgname{UNIST}}

\affil[4]{\orgdiv{Artificial Intelligence Graduate School}, \orgname{UNIST}}



\abstract{
In practice, including large number of assets in mean-variance portfolios can lead to higher transaction costs and management fees. To address this, one common approach is to select a smaller subset of assets from the larger pool, constructing more efficient portfolios. As a solution, we propose a new asset selection heuristic which generates a pre-defined list of asset candidates using a surrogate formulation and re-optimizes the cardinality-constrained tangent portfolio with these selected assets. This method enables faster optimization and effectively constructs portfolios with fewer assets, as demonstrated by numerical analyses on historical stock returns. Finally, we discuss a quantitative metric that can provide a initial assessment of the performance of the proposed heuristic based on asset covariance.
}

\keywords{Cardinality constraint, Portfolio optimization, Sparse portfolio}



\maketitle

\section{Introduction}
\label{sec:intro}

\citet{10.2307/2975974} proposed a mean-variance model for selecting a portfolio that minimizes risk for a given level of return, which has since become a cornerstone of modern portfolio theory \citep{kim2021mean}. Diversification is a key principle in risk reduction. However, an overly diversified portfolio can present practical challenges, making it difficult for managers to oversee numerous components, which leads to higher fixed transaction costs \citep{woodside2011portfolio, kremer2020sparse}.This complicates the practical application of the Markowitz model.

To tackle this issue, a cardinality constraint was introduced to the original Markowitz model, limiting the number of assets in the portfolio. However, this constraint increases the problem's complexity, rendering it NP-hard \citep{garey1979computers}. Since finding an exact optimal portfolio in such cases demands significant computational time, recent studies have focused on developing efficient algorithms.

There are two main approaches to controlling the cardinality of mean-variance portfolios. The first approach indirectly considers cardinality by adding a penalty term to the objective function or by imposing an upper bound constraint on the portfolio weight norm. For example, \citet{brodie2009sparse} controlled cardinality by introducing a penalty term to the objective function using the \(\ell_1\)-norm, while \citet{chen2013sparse} applied the \(\ell_p\)-norm to penalize portfolio weights. Additional examples of norm regularization can be found in \citet{demiguel2009generalized, kim2014portfolio, kremer2020sparse, corsaro2019adaptive}. The second approach directly incorporates cardinality constraints into the model and develops efficient algorithms to solve the resulting problem. \citet{bienstock1996computational} proposed a branch-and-cut algorithm for a relaxation of the cardinality constraint, and \citet{li2006optimal} presented an exact solution algorithm for obtaining an optimal lot solution for a cardinality-constrained mean-variance model under concave transaction costs. \citet{gao2013optimal} explored efficient exact solution algorithms by modifying the primal objective function, while \citet{kim2016sparse} and \citet{lee2020sparse} employed semi-definite relaxation to address the cardinality problem. Heuristic methods have also been explored: \citet{chang2000heuristics} examined genetic algorithms, tabu search, and simulated annealing, comparing their performance with the efficient frontier, and noted that tradeoffs between risk, return, and asset count should be explicitly considered when selecting a portfolio. Additionally, \citet{maringer2003optimization} proposed a hybrid local search algorithm combining simulated annealing and evolutionary strategies.

While much of the prior research on sparse portfolio selection has focused either on the indirect reduction of portfolio cardinality or the development of algorithms to address this complex problem, this paper introduces a novel asset selection heuristic called OSCAR. The name OSCAR, which stands for \textit{Optimize, Select with Cholesky, And Re-optimize}, encapsulates the entire process of the heuristic. The main difficulty of the cardinality-constrained portfolio problem lies in simultaneously determining the asset selection that satisfies the cardinality constraint and performing the subsequent optimization. OSCAR addresses this challenge effectively by decoupling the problem into separate steps, utilizing the properties of the Sharpe ratio to streamline the process. Initially, OSCAR optimizes the tangent portfolio without considering the cardinality constraint, yielding a fully diversified asset allocation. This portfolio is then transformed using the Cholesky decomposition of the covariance matrix, after which the components of the resulting vector are ranked by their absolute values to identify the selected assets. Finally, a sparse tangent portfolio is optimized based on the chosen assets. The overall process of OSCAR is depicted in Figure \ref{fig:framework}. This approach enables us to maximize the Sharpe ratio under cardinality constraints in significantly less time compared to the existing solver. We demonstrate the effectiveness and efficiency of OSCAR through numerical experiments using various historical stock returns.


The remainder of the paper is organized as follows.
Section 2 represents our proposed asset selection method OSCAR for the cardinality-constrained Sharpe ratio maximization problem.
The numerical experiments are represented in Section 3.
We further discuss characteristics in asset sets and performance of OSCAR in Section 4.
Finally, Section 5 concludes the paper.

\section{Asset selection method for sparse tangent portfolios}
\label{sec:oscar}

\begin{figure*}[t!]
    \begin{center}
    
    \centerline{\includegraphics[width=1.1\textwidth]{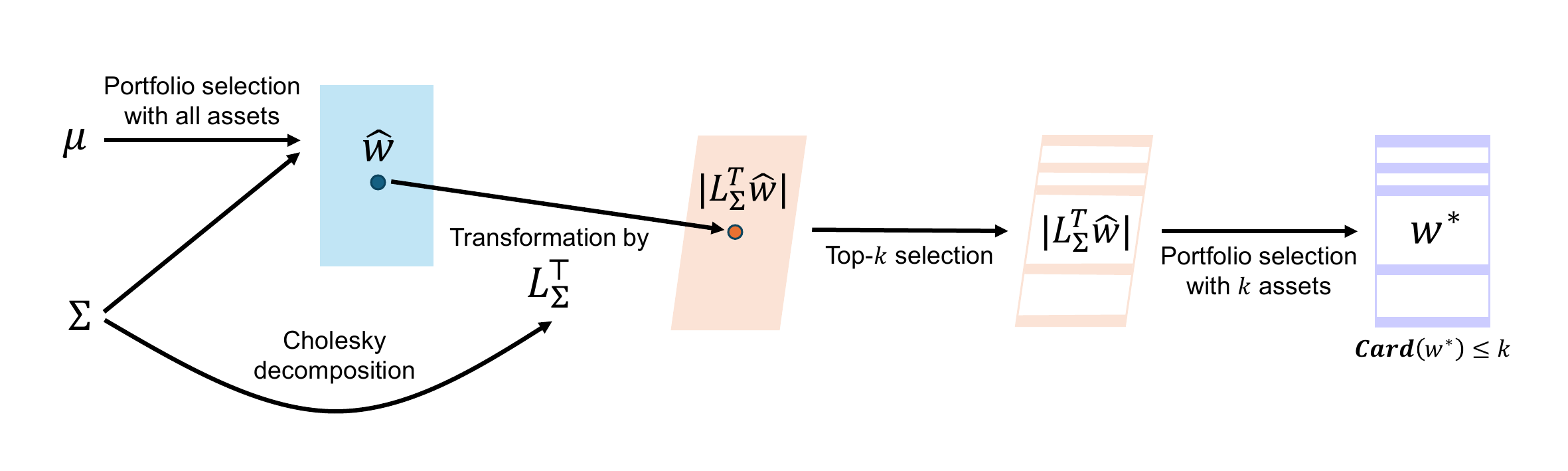}}
    \caption{
    Overview of the OSCAR heuristic for sparse tangent portfolio selection. The process starts with portfolio selection using all assets, followed by Cholesky decomposition of the covariance matrix and selection of the top-\( k \) assets based on \( |L_\Sigma^T \hat{w}| \). Finally, the constrained optimal portfolio \( w^* \) is obtained by solving the portfolio optimization problem with the selected assets.}
    \label{fig:framework}
    \end{center}
\end{figure*}

In this section, we first define a cardinality-constrained Sharpe ratio maximization problem.
To address the challenges posed by the cardinality constraint, we approximate the original problem.
Based on this approximation, we propose an asset selection heuristic that derives an optimal selection for an approximated problem.

\subsection{Cardinality-constrained Sharpe ratio maximization problem}
To determine the optimal tangent portfolio with the maximum Sharpe ratio \citep{sharpe1966mutual} for a set of $n$ assets, we solve the following problem, given the expected return vector and covariance matrix of returns $(\mu, \Sigma) \in (\mathbb{R}^n, \mathbb{R}^{n \times n})$.

\begin{equation}
\label{tangent}
    \begin{aligned}
        \max_{w \in \mathbb{R}^n} \quad & SR(w|\mu, \Sigma) = \frac{\mu^{T}w}{\sqrt{w^{T} \Sigma w}} \\
        \textrm{s.t.} \quad
        & \mathbf{1}^{T} w = 1 \\
    \end{aligned}
\end{equation}
where $\mathbf{1}$ denotes an $n$-dimensional vector of ones and $w$ is the weights of a portfolio. 

Although based on a solid theoretical model (\ref{tangent}), the resulting optimal portfolio is challenging to apply in real-world situations due to its inclusion of numerous well-diversified, non-zero components. A straightforward solution is to limit the number of assets in the portfolio, making it sparse. The model is then modified as follows:

\begin{equation}
\label{tangent_car}
    \begin{aligned}
        \max_{w \in \mathbb{R}^n} \quad & SR(w|\mu, \Sigma) = \frac{\mu^{T}w}{\sqrt{w^{T} \Sigma w}} \\
        \textrm{s.t.} \quad
        & \mathbf{1}^{T} w = 1 \\
        &  \mathbf{Card}(w) \leq k \\
    \end{aligned}
\end{equation}
where $k \in \{1, 2, \ldots, n\}$ is the number of assets that one can or want to hold, and $\mathbf{Card}(w)$ denotes the number of non-zero components of $w$.

Unfortunately, formulation (\ref{tangent_car}) is NP-hard \citep{garey1979computers}, meaning it cannot be solved efficiently.
Previous research addresses this issue by introducing a penalty term to the objective function \cite{brodie2009sparse} or developing efficient algorithms \cite{bienstock1996computational, li2006optimal, gao2013optimal, kim2016sparse, lee2020sparse, chang2000heuristics, maringer2003optimization}.

\subsection{OSCAR: Optimize, Select with Cholesky, And Re-optimize}
In this section, we develop an asset selection method tailored for cardinality-constrained portfolios building on the insights from the characteristics of the Sharpe ratio presented in \cite{kim2016uniformly}.
Our method, OSCAR, utilizes the geometric properties of the Sharpe ratio and adopts Cholesky decomposition of the covariance matrix to select the number of assets beforehand to satisfy the cardinality constraint.
We start by noting the Sharpe ratio is scale-invariant with respect to the portfolio weight $w$ by the following Proposition presented in \cite{kim2016uniformly}.

\begin{proposition}
\label{prop-scale-invar}
     For any $\lambda>0$, the Sharpe ratio is scale-invariant with respect to $w$, i.e.
     \[SR(w | \mu, \Sigma) = SR(\lambda w | \mu, \Sigma)\]
\end{proposition}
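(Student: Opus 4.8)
The plan is to prove the claim directly from the definition, substituting $\lambda w$ for $w$ and exploiting the homogeneity of the numerator and the denominator. First I would write out $SR(\lambda w \mid \mu, \Sigma) = \dfrac{\mu^{T}(\lambda w)}{\sqrt{(\lambda w)^{T}\Sigma(\lambda w)}}$ straight from formula (\ref{tangent}).

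Next I would use linearity of the inner product to pull the scalar out of the numerator, $\mu^{T}(\lambda w) = \lambda\,\mu^{T}w$, and bilinearity of the quadratic form to extract $\lambda^{2}$ from the denominator, $(\lambda w)^{T}\Sigma(\lambda w) = \lambda^{2}\,w^{T}\Sigma w$, so that $\sqrt{(\lambda w)^{T}\Sigma(\lambda w)} = \sqrt{\lambda^{2}}\,\sqrt{w^{T}\Sigma w}$.

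The single point that needs care — and the closest thing to an obstacle here — is the simplification $\sqrt{\lambda^{2}} = \lambda$, which holds precisely because $\lambda > 0$; for $\lambda < 0$ one would instead pick up $|\lambda|$ and the sign of the ratio would flip, so the hypothesis $\lambda>0$ is used in an essential way. I would also note in passing that $\Sigma$ is assumed positive definite, so $w^{T}\Sigma w > 0$ for $w \neq 0$ and the ratio is well defined. With $\lambda>0$, the denominator becomes $\lambda\,\sqrt{w^{T}\Sigma w}$.

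Finally, cancelling the common factor $\lambda$ from numerator and denominator gives $SR(\lambda w \mid \mu, \Sigma) = \dfrac{\mu^{T}w}{\sqrt{w^{T}\Sigma w}} = SR(w \mid \mu, \Sigma)$, which is the desired identity. The whole argument is a one-line computation and requires no further machinery; the only subtlety worth flagging explicitly is the role of the sign condition on $\lambda$.
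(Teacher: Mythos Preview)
Your argument is correct: the Sharpe ratio is a ratio of a $1$-homogeneous numerator and a $1$-homogeneous denominator (using $\lambda>0$ to get $\sqrt{\lambda^{2}}=\lambda$), so the scalar cancels. The paper does not actually spell out this computation; its proof consists solely of a citation to Proposition~1 of \cite{kim2016uniformly}, and your direct one-line verification is precisely the elementary calculation that underlies that cited result.
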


\begin{proof}
    See Proposition 1 in \cite{kim2016uniformly}.
\end{proof}

For any vector $w \in \mathbb{R}^n \setminus \{\mathbf{0}\}$, we can ensure that $w$ satisfies the constraint $\mathbf{1}^{T} w = \mathbf{1}$ by dividing $w$ by the sum of its components. Thus, from Proposition \ref{prop-scale-invar}, the Problem (\ref{tangent_car}) can be simplified as 

\begin{equation}
\label{tangent_car_nosumone}
    \begin{aligned}
        \max_{w \in \mathbb{R}^n} \quad & SR(w|\mu, \Sigma) = \frac{\mu^{T}w}{\sqrt{w^{T} \Sigma w}} \\
        \textrm{s.t.} \quad
        & \mathbf{Card}(w) \leq k \\
    \end{aligned}
\end{equation}

Building on this foundation, \cite{kim2016uniformly} further established criteria for assessing when the tangent portfolio outperforms alternative portfolios.

\begin{proposition}
\label{prop-udrp-angle}
    Let $w_{1}, w_{2} \in \mathbb{R}^{n}$ be any two portfolios on $n$ assets, and $\hat w \in \mathbb{R}^{n}$ be the optimal tangent portfolio from Problem (\ref{tangent}) based on the market expected excess return and the market covariance $(\mu, \Sigma) \in (\mathbb{R}^{n}, \mathbb{R}^{n \times n})$.
    Then,
    \[SR(w_{1}| \mu, \Sigma) \geq SR(w_{2}| \mu, \Sigma)\]
    if and only if
    \[\theta_{1} \leq \theta_{2}\]
    where
    \[\theta_{i} := \arccos \left(
    \frac{ (L_{\Sigma}^{T} w_{i})^{T} (L_{\Sigma}^{T} \hat w)}
    {\|L_{\Sigma}^{T} w_{i}\|_{2} \|L_{\Sigma}^{T} \hat w\|_{2}} \right)\]
    which is just an angle between $L_{\Sigma}^{T} w_{i}$ and $L_{\Sigma}^{T} \hat w$ where
    $L_{\Sigma}$ is the Cholesky decomposition of $\Sigma$
    (i.e. $\Sigma = L_{\Sigma} L_{\Sigma}^{T}$).
\end{proposition}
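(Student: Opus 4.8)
The plan is to linearize the Sharpe ratio through the change of variables induced by the Cholesky factor. Since $\Sigma$ is symmetric positive definite, $L_{\Sigma}$ is invertible, so for each portfolio $w \in \mathbb{R}^n \setminus \{\mathbf{0}\}$ I set $v := L_{\Sigma}^{T} w$ and $\tilde{\mu} := L_{\Sigma}^{-1}\mu$. Then $w^{T}\Sigma w = w^{T} L_{\Sigma} L_{\Sigma}^{T} w = \|v\|_{2}^{2}$ and $\mu^{T} w = \mu^{T}(L_{\Sigma}^{T})^{-1} v = (L_{\Sigma}^{-1}\mu)^{T} v = \tilde{\mu}^{T} v$, so that
\[
SR(w \mid \mu, \Sigma) = \frac{\tilde{\mu}^{T} v}{\|v\|_{2}}.
\]
This already exhibits the fact that the Sharpe ratio depends on $w$ only through the direction of $v = L_{\Sigma}^{T} w$, which is the whole point of working in the Cholesky-transformed coordinates.

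Next I would identify the transformed tangent portfolio. The first-order optimality condition for Problem (\ref{tangent}) — or, equivalently, a Cauchy--Schwarz argument applied to the displayed expression together with the scale-invariance of Proposition \ref{prop-scale-invar} — yields $\hat w \propto \Sigma^{-1}\mu$. Hence $L_{\Sigma}^{T}\hat w \propto L_{\Sigma}^{T}\Sigma^{-1}\mu = L_{\Sigma}^{T}(L_{\Sigma}L_{\Sigma}^{T})^{-1}\mu = L_{\Sigma}^{-1}\mu = \tilde{\mu}$, i.e.\ the image of the optimal tangent portfolio under $L_{\Sigma}^{T}$ points in the direction of $\tilde{\mu}$. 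Substituting this back, for any portfolio $w_{i}$,
\[
SR(w_{i} \mid \mu, \Sigma) = \frac{\tilde{\mu}^{T} v_{i}}{\|v_{i}\|_{2}} = \|\tilde{\mu}\|_{2}\,\frac{(L_{\Sigma}^{T}\hat w)^{T}(L_{\Sigma}^{T} w_{i})}{\|L_{\Sigma}^{T}\hat w\|_{2}\,\|L_{\Sigma}^{T} w_{i}\|_{2}} = \|\tilde{\mu}\|_{2}\cos\theta_{i},
\]
where $\theta_{i}$ is precisely the angle defined in the statement.

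The conclusion is then immediate: since $\mu \neq 0$ makes $\|\tilde{\mu}\|_{2} > 0$ a fixed positive constant that does not depend on $i$, and since $\cos$ is strictly decreasing on $[0,\pi]$ — the range of $\arccos$ — the inequality $SR(w_{1}\mid\mu,\Sigma) \geq SR(w_{2}\mid\mu,\Sigma)$ is equivalent to $\cos\theta_{1} \geq \cos\theta_{2}$, which in turn is equivalent to $\theta_{1} \leq \theta_{2}$. Both implications of the ``if and only if'' are obtained at once.

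I expect the main obstacle to be the careful justification of the second step: namely that the direction maximizing $\tilde{\mu}^{T} v / \|v\|_{2}$ is $v \propto \tilde{\mu}$, and that this direction indeed corresponds — via Proposition \ref{prop-scale-invar} — to the solution $\hat w$ of Problem (\ref{tangent}) under its budget constraint $\mathbf{1}^{T} w = 1$ (one must also record the standing nondegeneracy assumptions, namely $\Sigma$ positive definite so that $L_{\Sigma}$ is invertible, $\mu \neq 0$, and $w_{1}, w_{2} \neq 0$, which are exactly what is needed for the $\theta_{i}$ to be well defined). The remaining trigonometric monotonicity argument is routine.
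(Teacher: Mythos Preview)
Your argument is correct and complete. The paper itself does not supply a proof of this proposition; it simply cites Proposition~2 of \cite{kim2016uniformly}. What you have written is exactly the standard derivation behind that cited result: the Cholesky change of variables $v=L_\Sigma^T w$ whitens the risk, turning $SR(w)$ into the inner product of the unit vector $v/\|v\|_2$ with the fixed vector $\tilde\mu=L_\Sigma^{-1}\mu$; recognising that $L_\Sigma^T\hat w$ is a positive multiple of $\tilde\mu$ then gives $SR(w_i)=\|\tilde\mu\|_2\cos\theta_i$, from which the equivalence is immediate by monotonicity of $\cos$ on $[0,\pi]$. The only point worth tightening is the sign of the proportionality $L_\Sigma^T\hat w\propto\tilde\mu$: it is positive precisely under the standing assumption $\mathbf{1}^T\Sigma^{-1}\mu>0$ that makes the tangent portfolio in Problem~(\ref{tangent}) well defined, so you may wish to list that alongside the other nondegeneracy hypotheses you already flag.
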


\begin{proof}
    See Proposition 2 in \cite{kim2016uniformly}.
\end{proof}

According to Proposition \ref{prop-udrp-angle}, we can alternatively compare portfolio Sharpe ratios by calculating the angle $\theta$.
The smaller the angle with the optimal portfolio, the higher the Sharpe ratio. 
Let $\mathcal{K}$ be a set of all subsets of $\{1, 2, \ldots, n\}$ of which the cardinality is $k$. For $K \in \mathcal{K}$, define $P_K(\mathbb{R}^n)=\{w \in \mathbb{R}^n | w_i = 0, \forall i \notin K\}$. Then, Problem (\ref{tangent_car_nosumone}) can be transformed equivalently as the following:

\[
\min_{K \in \mathcal{K}} \left( \min_{w \in P_K(\mathbb{R}^n)} \arccos{\left(
    \frac{ (L_{\Sigma}^{T} w)^{T} (L_{\Sigma}^{T} \hat w)}
    {\|L_{\Sigma}^{T} w\|_{2} \|L_{\Sigma}^{T} \hat w\|_{2}} \right)} \right)
\]

Our key idea is to approximate this problem as follows

\begin{equation}
\label{approx}
\min_{K \in \mathcal{K}} \left( \min_{w \in P_K(L_\Sigma^T(\mathbb{R}^n))} \arccos{\left(
    \frac{ (L_{\Sigma}^{T} w)^{T} (L_{\Sigma}^{T} \hat w)}
    {\|L_{\Sigma}^{T} w\|_{2} \|L_{\Sigma}^{T} \hat w\|_{2}} \right)} \right)
\end{equation}

where
\[
P_K(L_\Sigma^T(\mathbb{R}^n)) = \{w \in \mathbb{R}^n | (L_\Sigma^Tw)_i = 0, \forall i \notin K\} .
\]

Compared to Problem (\ref{tangent_car_nosumone}), Problem (\ref{approx}) can be solved easily in that the following Proposition gives the optimal solution to an approximated problem.

\begin{proposition}
\label{greedy_selection}

    Let $K^\ast \in \mathcal{K}$ be a subset corresponding to the index of the largest $k$ elements of $|L_\Sigma^T \hat w|$ where $|\cdot|$ is element-wise absolute value operator. Then $K^\ast$ is the optimal solution to Problem (\ref{approx}). i.e.
    \[
    K^\ast = \operatorname*{arg\,min}_{K \in \mathcal{K}} \left( \min_{w \in P_K(L_\Sigma^T(\mathbb{R}^n))} \arccos{\left(
    \frac{ (L_{\Sigma}^{T} w)^{T} (L_{\Sigma}^{T} \hat w)}
    {\|L_{\Sigma}^{T} w\|_{2} \|L_{\Sigma}^{T} \hat w\|_{2}} \right)} \right)
    \]    
\end{proposition}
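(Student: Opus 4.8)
The plan is to reduce the nested minimization to an elementary combinatorial problem by changing variables. Since $\Sigma$ is positive definite, its Cholesky factor $L_\Sigma$ is invertible, so the linear map $w \mapsto v := L_\Sigma^{T} w$ is a bijection of $\mathbb{R}^n$. Under this substitution the feasible set $P_K(L_\Sigma^{T}(\mathbb{R}^n))$ corresponds exactly to $\{v \in \mathbb{R}^n : v_i = 0 \text{ for all } i \notin K\}$, and, writing $\hat v := L_\Sigma^{T}\hat w$ (which is nonzero because $\mathbf{1}^{T}\hat w = 1$ forces $\hat w \neq \mathbf{0}$ and $L_\Sigma^{T}$ is invertible), Problem (\ref{approx}) becomes
\[
\min_{K \in \mathcal{K}} \; \min_{\substack{v \neq \mathbf{0} \\ v_i = 0,\ i \notin K}} \arccos\!\left( \frac{v^{T}\hat v}{\|v\|_2 \, \|\hat v\|_2} \right),
\]
where $v = \mathbf{0}$ is excluded because the objective is undefined there.

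First I would use monotonicity: $\arccos$ is strictly decreasing on $[-1,1]$, so minimizing the angle is equivalent to maximizing the cosine $c(v) := v^{T}\hat v/(\|v\|_2\|\hat v\|_2)$ over admissible $v$. Fixing $K$ and using $v_i = 0$ for $i \notin K$, we have $v^{T}\hat v = \sum_{i \in K} v_i \hat v_i$ and $\|v\|_2^2 = \sum_{i \in K} v_i^2$, so the Cauchy--Schwarz inequality gives $v^{T}\hat v \leq \|v\|_2 \, (\sum_{i \in K}\hat v_i^2)^{1/2}$, with equality when $v_i = \hat v_i$ for every $i \in K$. Hence, for each fixed $K$,
\[
\max_{\substack{v \neq \mathbf{0} \\ v_i = 0,\ i \notin K}} c(v) \;=\; \frac{(\sum_{i \in K}\hat v_i^2)^{1/2}}{\|\hat v\|_2},
\]
attained by the explicit feasible vector $v^{(K)}$ with $v^{(K)}_i = \hat v_i$ for $i \in K$ and $v^{(K)}_i = 0$ otherwise; this also shows the inner minimum is a genuine minimum rather than an infimum.

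The final step is to choose $K \in \mathcal{K}$ maximizing $\sum_{i \in K}\hat v_i^2 = \sum_{i \in K}|(L_\Sigma^{T}\hat w)_i|^2$. Since every term is nonnegative and $K$ ranges over all $k$-element subsets, a maximizer is obtained by selecting the $k$ indices with the largest values of $|(L_\Sigma^{T}\hat w)_i|$, which is precisely $K^\ast$. Combining the two steps, $K^\ast$ attains the outer minimum, i.e.\ it solves Problem (\ref{approx}). (If several coordinates of $L_\Sigma^{T}\hat w$ tie in magnitude the maximizing subset need not be unique, but any such choice --- in particular $K^\ast$ --- is optimal.)

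I do not expect a genuine obstacle here; the only delicate points are the degenerate cases, namely verifying that the admissible set minus $\{\mathbf{0}\}$ is nonempty (which holds since $k \geq 1$), that $\hat v \neq \mathbf{0}$ so the cosine at the optimum is well defined (and in fact strictly positive), and that the inner ``$\min$'' is attained --- all of which are settled by the explicit optimizer $v^{(K)}$ exhibited above.
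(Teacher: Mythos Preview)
Your proof is correct and follows essentially the same approach as the paper: both use the monotonicity of $\arccos$ to convert the problem to maximizing the cosine, solve the inner problem for fixed $K$ by taking the projection of $L_\Sigma^{T}\hat w$ onto the coordinates in $K$, and then observe that the outer maximum picks the $k$ largest-magnitude entries. Your presentation is slightly cleaner in that you make the change of variables $v = L_\Sigma^{T}w$ explicit and invoke Cauchy--Schwarz directly, whereas the paper carries the $L_\Sigma^{T}$ through and uses the projection $(L_\Sigma^{T}\hat w)_K$; you are also more careful about the degenerate cases (nonemptiness, $\hat v \neq \mathbf{0}$, attainment, ties), which the paper glosses over.
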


\begin{proof}
    Let $L_\Sigma^T \hat w=(x_{1},x_{2},\dots,x_{n}) \in \mathbb{R}^n$.
    Without loss of generality, assume $|x_{1}| \geq |x_{2}| \geq \dots \geq |x_{n}|$. Then, $K^\ast = \{1, 2, \ldots, k\}$. 
    
    Since $\arccos$ is a monotonic decreasing function on $[0, 1]$, 
    \begin{align*}
        & \operatorname*{arg\,min}_{K \in \mathcal{K}} \left( \min_{w \in P_K(L_\Sigma^T(\mathbb{R}^n))} \arccos{\left(
        \frac{ (L_{\Sigma}^{T} w)^{T} (L_{\Sigma}^{T} \hat w)}
        {\|L_{\Sigma}^{T} w\|_{2} \|L_{\Sigma}^{T} \hat w\|_{2}} \right)} \right) \\
        &= 
        \operatorname*{arg\,max}_{K \in \mathcal{K}} \left( \max_{w \in P_K(L_\Sigma^T(\mathbb{R}^n))}
        \frac{ (L_{\Sigma}^{T} w)^{T} (L_{\Sigma}^{T} \hat w)}
        {\|L_{\Sigma}^{T} w\|_{2} \|L_{\Sigma}^{T} \hat w\|_{2}} \right)
    \end{align*}
        
    Let $(L_\Sigma^T \hat w)_K$ be a projection where the component of $L_\Sigma^T \hat w$ corresponding to the index not included in $K$ is changed to 0. Then, we have

    \begin{align*}
        & \operatorname*{arg\,max}_{K \in \mathcal{K}} \left( \max_{w \in P_K(L_\Sigma^T(\mathbb{R}^n))}
        \frac{ (L_{\Sigma}^{T} w)^{T} (L_{\Sigma}^{T} \hat w)}
        {\|L_{\Sigma}^{T} w\|_{2} \|L_{\Sigma}^{T} \hat w\|_{2}} \right) \\
        & = \operatorname*{arg\,max}_{K \in \mathcal{K}} \left( \max_{w \in P_K(L_\Sigma^T(\mathbb{R}^n))}
        \frac{ (L_{\Sigma}^{T} w)^{T} (L_{\Sigma}^{T} \hat w)_K}
        {\|L_{\Sigma}^{T} w\|_{2} \|L_{\Sigma}^{T} \hat w\|_{2}} \right)  \\
        &= \operatorname*{arg\,max}_{K \in \mathcal{K}} \left( \max_{w \in P_K(L_\Sigma^T(\mathbb{R}^n))}
        \frac{\|(L_{\Sigma}^{T} \hat w)_K\|_{2}}
        {\|L_{\Sigma}^{T} \hat w\|_{2}}
        \frac{ (L_{\Sigma}^{T} w)^{T} (L_{\Sigma}^{T} \hat w)_K}
        {\|L_{\Sigma}^{T} w\|_{2} \|(L_{\Sigma}^{T} \hat w)_K\|_{2}} \right) \\
        &= \operatorname*{arg\,max}_{K \in \mathcal{K}} \left(
        \frac{\|(L_{\Sigma}^{T} \hat w)_K\|_{2}}
        {\|L_{\Sigma}^{T} \hat w\|_{2}}
        \frac{ (L_{\Sigma}^{T} \hat w)_K^{T} (L_{\Sigma}^{T} \hat w)_K}
        {\|(L_{\Sigma}^{T} \hat w)_K\|_{2} \|(L_{\Sigma}^{T} \hat w)_K\|_{2}} \right) \\
        & (\because w=(L_\Sigma^T)^{-1}(L_\Sigma^T \hat w)_K \in P_K(L_\Sigma^T(\mathbb{R}^n))) \\
        &= \operatorname*{arg\,max}_{K \in \mathcal{K}} \left(
        \frac{\|(L_{\Sigma}^{T} \hat w)_K\|_{2}}
        {\|L_{\Sigma}^{T} \hat w\|_{2}} \right)
        = K^\ast
    \end{align*}

    Thus, $K^\ast = \{1, 2, \ldots, k\}$ is the optimal solution.
\end{proof}

From Proposition \ref{greedy_selection}, it can be seen that the optimal solution to Problem (\ref{approx}) is to select the assets corresponding to the $k$ largest components of $|L_\Sigma^T \hat w|$. The final step of our heuristic is to solve Problem (\ref{tangent}) with the selected assets, which yields the constrained optimal portfolio $w^\ast$. Since Problem (\ref{tangent}) has no cardinality constraint and the number of assets (dimension of decision variables) is reduced, it can be solved in a short time. We illustrate the overall procedure of OSCAR in Algorithm \ref{alg:oscar}. One important contribution in our asset selection heuristic is reusability. Let us consider a case where the asset universe ($\mu$ and $\Sigma$) remains the same, but the limit of cardinality $k$ changes to $k'$. In this case, we do not have to solve the whole problem again. We can use the same ordered asset list derived in Proposition \ref{greedy_selection}. By choosing the $k'$ number of assets instead of $k$ from the ordered asset list, we can immediately choose the best asset candidates.

Note that Problem (\ref{approx}) is an approximated problem. Therefore, the asset set obtained through Proposition \ref{greedy_selection} is not guaranteed to be optimal for Problem (\ref{tangent_car_nosumone}). Nonetheless, we demonstrate in Section \ref{sec:exp} that OSCAR achieves excellent optimality across various asset classes. Furthermore, in Section \ref{discussion}, we discuss in greater detail the relationship between OSCAR's performance and the covariance matrix of the assets.

\begin{algorithm}[t]
    \caption{OSCAR: Optimize, Select with Cholesky, And Re-optimize}
    \label{alg:oscar}
    \setstretch{1.1}
    \begin{algorithmic} 
        \State {\bfseries Input:} Expected Return $\mu$, Covariance Matrix $\Sigma$, Cardinality $k$
        \State {\bfseries Output:} Portfolio $w^*$
        \State {\textit{1. Optimize:}}
        \State \hspace{\parindent}  Solve Problem \ref{tangent_car_nosumone} with all assets and derive $\hat{w}$.
        
        \State {\textit{2. Select with Cholesky:}}
        \State \hspace{\parindent} Decompose $\Sigma = L_{\Sigma}^{T} L_{\Sigma}$ by Cholesky decomposition.
        \State \hspace{\parindent} Select $k$ largest components of $| L_{\Sigma}^{T} \hat{w} |$.
        
        \State {\textit{3. Re-optimize:}}
        \State \hspace{\parindent} Solve Problem \ref{tangent_car_nosumone} with assets selected in Step 2 and derive $w^*$. 
    
    \end{algorithmic}
\end{algorithm}

\section{Numerical experiments}
\label{sec:exp}

In this section, we compare our proposed asset selection heuristic to the optimization solver CPLEX \citep{cesarone2013new, costola2022mean}.
We demonstrate how our heuristic performs compared to other widely used heuristics for cardinality-constrained problems \citep{zheng2020index}.
The evaluation focuses on three key metrics: the Sharpe ratio of the tangent portfolio, the computational time required to derive the solution, and the similarity between the assets selected by the heuristic and those chosen in the optimal solution.
The ground truth is defined as the solution derived by CPLEX within one day (86,400 seconds).

\subsection{Datasets}
For our experiment, we use six 10-year daily closing price datasets from January 2014 to December 2023, obtained from Yahoo Finance.
Specifically, the datasets are EuroStoxx50 (47 assets), FTSE100 (94 assets), S\&P100 (97 assets), KOSPI200 (157 assets), Nikkei225 (213 assets) and S\&P500 (466 assets).
We exclude companies that were added to or removed from the index during the time period from our dataset.

\subsection{Evaluation Metrics and Benchmarks}
\label{sec-benchmark}
We evaluate our methodology in three aspects, \textit{performance}, \textit{time}, and \textit{hit count}:
\begin{itemize}
    \item[] \textbf{Performance Measures}
    \item \textit{Performance} is measured as the ratio of the Sharpe ratio achieved by the heuristic to that achieved by CPLEX, indicating the heuristic’s approximation quality relative to the optimal solution.
    \item \textit{Time} represents the computational time required to generate the solution.
    \item \textit{Hit count} is defined as the proportion of selected assets that match between the heuristic and CPLEX solutions, providing a measure of asset selection similarity.
\end{itemize}

OSCAR can be decomposed into selection and optimization steps, allowing it to be readily re-applied even when $k$ changes.
We benchmarked our heuristic against four alternative asset selection approaches commonly employed in cardinality-constrained optimization problems \cite{zheng2019indextrackingcardinalityconstraints}, which share these features.
\begin{itemize}
    \item[] \textbf{Baselines}
    \item Top-$k$ Sharpe Ratio Selection (SR): Solve Problem (\ref{tangent}), select the top $k$ assets with the top-$k$ Sharpe ratio value.
    Solve Problem (\ref{tangent}) with the selected assets.
    \item Top-$k$ Weight Selection (W): Solve Problem (\ref{tangent}), select the top $k$ assets with the largest absolute weight value.
    Solve Problem (\ref{tangent}) with the selected assets.
    \item Forward Selection (F): Solve Problem (\ref{tangent}), select one asset with the largest absolute weight value, record it in a selected asset list.
    Repeat the procedure without the selected assets until $k$ assets are chosen.
    \item Backward Selection (B): Solve Problem (\ref{tangent}), discard one asset with the smallest absolute weight value
    Repeat the procedure without the selected assets until $k$ assets are left.
\end{itemize}

\subsection{Results}

Here, we compare OSCAR with four other heuristics, using the ground truth derived from the CPLEX solver within 86,400 seconds.
Note that the upper bound of the performance measure is not 100\%, as the CPLEX solver may produce a suboptimal solution due to time constraints.
For the experiment, we set four cardinality constraints $k$ as \{5, 10, 15, 20\}\% of each dataset, rounding up to the nearest integer.

Table \ref{table:k5-20} summarizes the computation time and performance of each dataset, using both the heuristics and the CPLEX solver.
The number of assets $N$ is specified for each dataset, and each table represents a different cardinality constraint $k$.
The time and performance are indicated for four benchmarks and OSCAR for each dataset.

Notably, the CPLEX was unable to derive an optimal solution within a day solving for the S\&P100 dataset (97 assets) with $k=0.10 \times N$ as shown in Table \ref{table:k5-20}.
Furthermore, starting from $k=0.15 \times N$, CPLEX failed to terminate across all datasets except for the EuroStoxx50.

To provide a more intuitive understanding of the performance-time trade-off, we present a scatter plot in  Figure \ref{fig:scatter}.
Each heuristic is represented with a distinct marker, with OSCAR highlighted as a red dot.
The horizontal dashed line at a performance level of 100\% represents the CPLEX result.
The time is log-scaled.
The upper-left region of the figure indicates higher performance with shorter computation time.

OSCAR, as well as the top-$k$ Sharpe ratio (SR) and top-$k$ weight (W) selection methods, required less computation time compared to the forward (F) and backward (B) selection methods.
This difference is attributed to selecting $k$ assets in a single loop, whereas forward (F) and backward (B) selection methods iteratively solve optimization problems.

\begin{sidewaystable}[htp]
    \centering

    \caption{
    Performance and time results for benchmarks, OSCAR, and CPLEX under cardinality constraints \(k = \ceil{ 0.05 \times N }\), \(k = \ceil{ 0.10 \times N }\), \(k = \ceil{ 0.15 \times N }\), and \(k = \ceil{ 0.20 \times N }\).
    }
    \label{table:k5-20}
    
        \setlength{\tabcolsep}{4pt}
        \begin{tabular}{cccccccccccccc}
        
        \toprule
        \multirow{2}{*}{\textbf{Data}} & \multirow{2}{*}{\textbf{N}} 
        & \multicolumn{6}{c}{\textbf{\(k = \ceil{ 0.05 \times N }\)}} 
        & \multicolumn{6}{c}{\textbf{\(k = \ceil{ 0.10 \times N }\)}} \\
        \cmidrule(lr){3-8} \cmidrule(lr){9-14}
               &  & SR & W & F & B & OSCAR & CPLEX
               & SR & W & F & B & OSCAR & CPLEX \\
        \midrule
        EuroStoxx50 & 47
         & \makecell{$0.08s$\\83.17\%} & \makecell{$0.09s$\\87.65\%} & \makecell{$0.07s$\\62.98\%} & \makecell{$0.49s$\\98.58\%} & \makecell{$0.03s$\\86.30\%} & $0.87s$ 
         & \makecell{$0.07s$\\76.07\%} & \makecell{$0.07s$\\94.59\%} & \makecell{$0.09s$\\79.35\%} & \makecell{$0.54s$\\99.22\%} & \makecell{$0.03s$\\93.15\%} & $2.48s$ \\
        \hline
        
        FTSE100 & 94
         & \makecell{$0.07s$\\75.25\%} & \makecell{$0.10s$\\82.20\%} & \makecell{$0.22s$\\56.28\%} & \makecell{$1.53s$\\91.01\%} & \makecell{$0.06s$\\99.05\%} & $449.86s$ 
         & \makecell{$0.07s$\\67.09\%} & \makecell{$0.10s$\\95.00\%} & \makecell{$0.30s$\\59.41\%} & \makecell{$1.40s$\\97.66\%} & \makecell{$0.17s$\\93.94\%} & $51254.78s$ \\
        \hline
        
        S\&P100 & 97
         & \makecell{$0.08s$\\86.15\%} & \makecell{$0.14s$\\83.26\%} & \makecell{$0.33s$\\66.73\%} & \makecell{$1.87s$\\99.87\%} & \makecell{$0.08s$\\95.13\%} & $358.34s$ 
         & \makecell{$0.07s$\\78.01\%} & \makecell{$0.10s$\\91.87\%} & \makecell{$0.62s$\\70.83\%} & \makecell{$1.89s$\\99.06\%} & \makecell{$0.09s$\\95.88\%} & $86400s$ \\
        \hline
        
        KOSPI200 & 157
        & \makecell{$0.12s$\\81.49\%} & \makecell{$0.10s$\\74.96\%} & \makecell{$0.94s$\\58.36\%} & \makecell{$8.48s$\\93.14\%} & \makecell{$0.11s$\\95.56\%} & $86400s$ 
        & \makecell{$0.07s$\\72.20\%} & \makecell{$0.12s$\\88.00\%} & \makecell{$1.90s$\\53.42\%} & \makecell{$10.27s$\\94.07\%} & \makecell{$0.10s$\\97.03\%} & $86400s$ \\
        \hline
        
        NIKKEI225 & 213
        & \makecell{$0.21s$\\63.22\%} & \makecell{$0.18s$\\71.71\%} & \makecell{$1.67s$\\64.70\%} & \makecell{$17.67s$\\88.06\%} & \makecell{$0.12s$\\89.63\%} & $86400s$ 
        & \makecell{$0.07s$\\54.80\%} & \makecell{$0.15s$\\94.16\%} & \makecell{$3.17s$\\62.85\%} & \makecell{$16.82s$\\94.26\%} & \makecell{$0.18s$\\94.09\%} & $86400s$ \\
        \hline
        
        S\&P500 & 466
        & \makecell{$0.45s$\\67.15\%} & \makecell{$0.40s$\\75.25\%} & \makecell{$8.84s$\\70.06\%} & \makecell{$78.50s$\\99.99\%} & \makecell{$0.51s$\\95.32\%} & $86400s$ 
        & \makecell{$0.46s$\\63.03\%} & \makecell{$0.58s$\\84.41\%} & \makecell{$16.04s$\\72.30\%} & \makecell{$79.62s$\\95.15\%} & \makecell{$1.80s$\\95.51\%} & $86400s$ \\
        \hline & \\
        
        \toprule

        \multirow{2}{*}{\textbf{Data}} & \multirow{2}{*}{\textbf{N}} 
        & \multicolumn{6}{c}{\textbf{\(k = \ceil{ 0.15 \times N }\)}} 
        & \multicolumn{6}{c}{\textbf{\(k = \ceil{ 0.20 \times N }\)}} \\
        \cmidrule(lr){3-8} \cmidrule(lr){9-14}
               &  & SR & W & F & B & OSCAR & CPLEX 
               & SR & W & F & B & OSCAR & CPLEX \\
        \midrule
        EuroStoxx50 & 47
        & \makecell{$0.08s$\\70.77\%} & \makecell{$0.08s$\\97.54\%} & \makecell{$0.18s$\\79.28\%} & \makecell{$0.44s$\\94.90\%} & \makecell{$0.04s$\\94.47\%} & $7.94s$ 
        & \makecell{$0.08s$\\69.33\%} & \makecell{$0.07s$\\97.46\%} & \makecell{$0.14s$\\78.29\%} & \makecell{$0.46s$\\97.49\%} & \makecell{$0.04s$\\93.75\%} & $14.85s$ \\
        \hline
        FTSE100 & 94
        & \makecell{$0.09s$\\64.26\%} & \makecell{$0.09s$\\96.28\%} & \makecell{$0.45s$\\59.09\%} & \makecell{$1.45s$\\98.76\%} & \makecell{$0.09s$\\96.89\%} & $86400s$ 
        & \makecell{$0.09s$\\63.43\%} & \makecell{$0.09s$\\94.99\%} & \makecell{$0.55s$\\60.04\%} & \makecell{$1.20s$\\100.24\%} & \makecell{$0.09s$\\94.44\%} & $86400s$ \\
        \hline
        S\&P100 & 97
        & \makecell{$0.11s$\\73.19\%} & \makecell{$0.08s$\\91.35\%} & \makecell{$0.66s$\\66.73\%} & \makecell{$1.94s$\\99.87\%} & \makecell{$0.09s$\\95.89\%} & $86400s$ 
        & \makecell{$0.16s$\\70.87\%} & \makecell{$0.09s$\\94.99\%} & \makecell{$0.73s$\\64.70\%} & \makecell{$1.78s$\\100.24\%} & \makecell{$0.17s$\\95.87\%} & $86400s$ \\
        \hline
        KOSPI200 & 157
        & \makecell{$0.07s$\\67.59\%} & \makecell{$0.14s$\\91.90\%} & \makecell{$2.87s$\\59.01\%} & \makecell{$8.81s$\\94.61\%} & \makecell{$0.09s$\\97.07\%} & $86400s$ 
        & \makecell{$0.12s$\\65.77\%} & \makecell{$0.11s$\\90.75\%} & \makecell{$3.71s$\\58.07\%} & \makecell{$7.94s$\\92.14\%} & \makecell{$0.17s$\\97.07\%} & $86400s$ \\
        \hline
        NIKKEI225 & 213
        & \makecell{$0.14s$\\52.79\%} & \makecell{$0.18s$\\91.35\%} & \makecell{$4.39s$\\67.29\%} & \makecell{$17.41s$\\98.74\%} & \makecell{$0.14s$\\94.47\%} & $86400s$ 
        & \makecell{$0.12s$\\51.97\%} & \makecell{$0.22s$\\90.75\%} & \makecell{$7.46s$\\62.31\%} & \makecell{$15.81s$\\100.05\%} & \makecell{$0.16s$\\94.47\%} & $86400s$ \\
        \hline
        S\&P500 & 466
        & \makecell{$0.44s$\\61.91\%} & \makecell{$0.42s$\\96.03\%} & \makecell{$24.77s$\\70.73\%} & \makecell{$78.17s$\\100.50\%} & \makecell{$0.46s$\\92.13\%} & $86400s$ 
        & \makecell{$0.44s$\\61.55\%} & \makecell{$0.47s$\\94.99\%} & \makecell{$30.46s$\\67.41\%} & \makecell{$77.34s$\\100.69\%} & \makecell{$0.50s$\\91.27\%} & $86400s$ \\
        \hline
        
        \end{tabular}
    \setlength{\tabcolsep}{6pt}
\end{sidewaystable}
\begin{figure}[t]
\includegraphics[width=\textwidth]{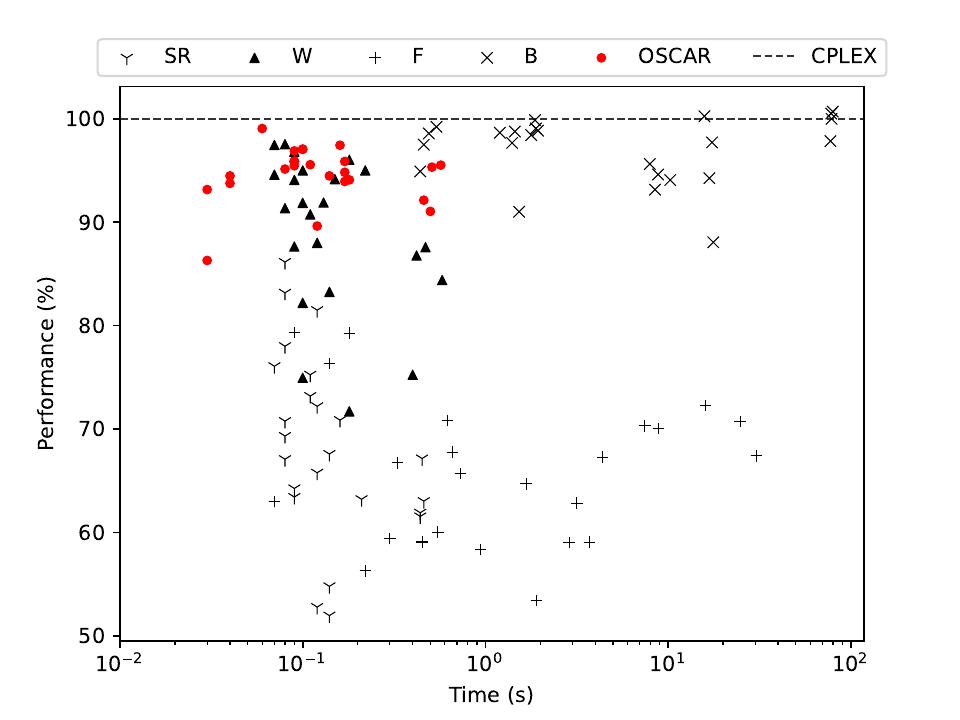}
\centering
\caption{
Scatter plot illustrating performance (\%) and time (s) of benchmarks and OSCAR.
The time axis is log-scaled.
Points on the upper-left region represent higher performance and shorter asset selection times.
The horizontal dashed line represents the CPLEX result with the time limit of 86400 seconds.
Note that the CPLEX result may not represent the true optimal solution.
}
\label{fig:scatter}
\end{figure}

From a performance standpoint, OSCAR, W and B outperformed SR and F.
In some instances, heuristic B even surpassed the CPLEX solver's results, achieving a performance exceeding 100\%.
However, the backward selection method was the slowest among the benchmarks, taking nearly $10^2$ times longer than OSCAR, SR, and W.
Furthermore, the computation time for F and B selection methods increased proportionally as the number of assets grow, due to the increasing number of solving optimization problems.

In contrast, OSCAR maintained both efficiency and efficacy even with the increasing number of assets.
While the time required for Cholesky decomposition does increase with the number of assets $N$, this overhead is negligible compared to the time consumed by solving optimization problems.
Importantly, OSCAR demonstrated consistent computation times across varying cardinality constraints $k$.
From a performance-time trade-off point of view, we conclude that the portfolio derived by OSCAR satisfies both high performance and short time compared to other heuristics.


We also present the results from the hit count perspective in Table \ref{table:hitcount}.
For six datasets, we evaluated four cardinality-constrained cases for each dataset.
For the chosen $k$ number of assets by CPLEX from total $N$ assets, we present the number of matching assets achieved by heuristics.
We bold-lettered and underlined for the best performing results for each experiment.

Notably, OSCAR achieved a high hit count in the majority of cases, outperforming other benchmarks in most instances.
Even in cases where OSCAR did not achieve the highest hit count, its performance was consistently close to the best.

\begin{table*}[t]
    \caption{
    Hit count for top-$k$ Sharpe ratio (SR), top-$k$ weight (W), forward (F), backward (B) selections and OSCAR.
    We use four cardinality constraints $k$ as \{5, 10, 15, 20\}\% of each dataset's asset number $N$, rounding up to the nearest integer.
    The best performing results in each experiment are bold-lettered and underlined.
    }
    \centering
    \begin{small}
        \begin{tabularx}{\textwidth}{cc @{\hspace*{10mm}} c @{\hspace*{10mm}} *5{Y}}
        \toprule
        \multirow{2}{*}{\textbf{Data}} & \multirow{2}{*}{\textbf{N}} & \multirow{2}{*}{\textbf{$k$}} & \multicolumn{5}{c}{\textbf{Benchmarks}} \\
        \cmidrule(l){4-8}
               & &  & SR  & W & F & B & OSCAR \\
        \midrule
        EuroStoxx50 & 47 & \makecell{3\\5\\8\\10}
        & \makecell{\underline{\textbf{2}}\\2\\3\\4}
        & \makecell{\underline{\textbf{2}}\\\underline{\textbf{4}}\\\underline{\textbf{6}}\\\underline{\textbf{7}}}
        & \makecell{0\\2\\2\\3}
        & \makecell{\underline{\textbf{2}}\\\underline{\textbf{4}}\\5\\\underline{\textbf{7}}}
        & \makecell{\underline{\textbf{2}}\\\underline{\textbf{4}}\\4\\6} \\
        \hline
        FTSE100 & 94 & \makecell{5\\10\\15\\19}
        & \makecell{2\\4\\6\\8}
        & \makecell{2\\\underline{\textbf{8}}\\12\\13}
        & \makecell{1\\2\\3\\5}
        & \makecell{3\\\underline{\textbf{8}}\\\underline{\textbf{14}}\\\underline{\textbf{15}}}
        & \makecell{\underline{\textbf{4}}\\\underline{\textbf{8}}\\11\\\underline{\textbf{15}}} \\
        \hline
        S\&P100 & 97 & \makecell{5\\10\\15\\20}
        & \makecell{\underline{\textbf{3}}\\4\\5\\6}
        & \makecell{2\\7\\8\\15}
        & \makecell{0\\4\\4\\6}
        & \makecell{\underline{\textbf{3}}\\\underline{\textbf{8}}\\\underline{\textbf{10}}\\\underline{\textbf{16}}}
        & \makecell{2\\\underline{\textbf{8}}\\\underline{\textbf{10}}\\\underline{\textbf{16}}} \\
        \hline
        KOSPI200 & 157 & \makecell{8\\16\\24\\32}
        & \makecell{4\\7\\10\\15}
        & \makecell{3\\8\\16\\22}
        & \makecell{1\\2\\5\\9}
        & \makecell{\underline{\textbf{5}}\\10\\14\\20}
        & \makecell{4\\\underline{\textbf{12}}\\\underline{\textbf{19}}\\\underline{\textbf{27}}} \\
        \hline
        NIKKEI225 & 213 & \makecell{11\\22\\32\\43}
        & \makecell{4\\8\\14\\16}
        & \makecell{2\\\underline{\textbf{15}}\\\underline{\textbf{25}}\\34} 
        & \makecell{3\\6\\8\\13} 
        & \makecell{\underline{\textbf{5}}\\13\\22\\\underline{\textbf{35}}}
        & \makecell{\underline{\textbf{5}}\\\underline{\textbf{15}}\\24\\34} \\
        \hline

        S\&P500 & 466 & \makecell{24\\47\\70\\94}
        & \makecell{4\\11\\17\\25}
        & \makecell{7\\15\\27\\55}
        & \makecell{4\\5\\5\\15} 
        & \makecell{\underline{\textbf{10}}\\16\\32\\\underline{\textbf{59}}}
        & \makecell{\underline{\textbf{10}}\\\underline{\textbf{24}}\\\underline{\textbf{37}}\\57} \\
        \hline
        \end{tabularx}
    \end{small}
    \label{table:hitcount}
\end{table*}

\begin{figure*}[t!]
\begin{center}
\centerline{\includegraphics[width=0.8\textwidth]{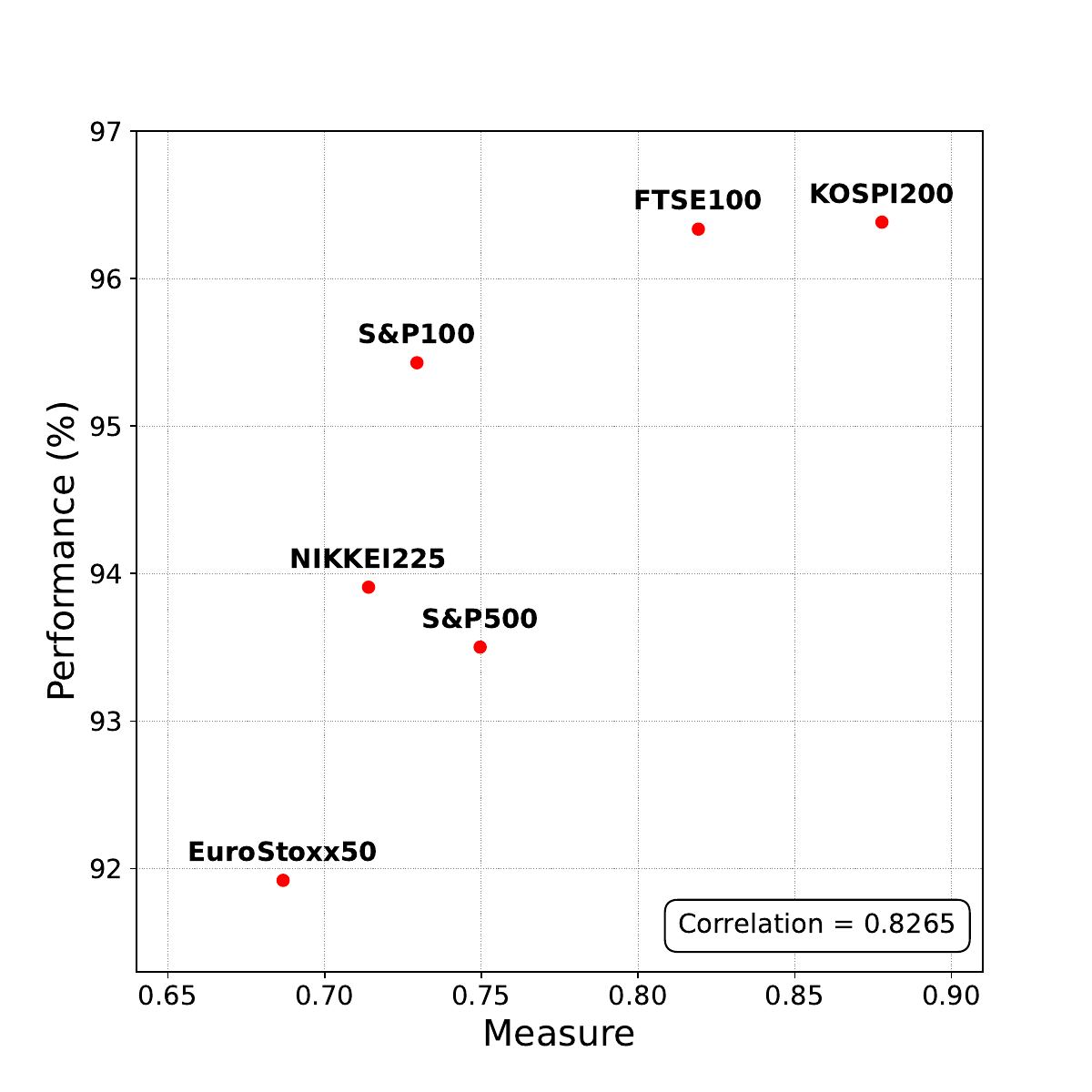}}
\caption{Scatter plot illustrating the relationship between OSCAR’s performance and the diagonal dominance of asset set covariance matrices. The x-axis represents the ratio of the mean diagonal elements to the sum of the mean diagonal and mean off-diagonal elements, indicating the degree of diagonal dominance in each covariance matrix. The y-axis shows OSCAR’s average performance across various cardinality constraints, measured as the Sharpe ratio relative to the optimal selection. The plot demonstrates that higher diagonal dominance, which reflects lower correlations between assets, generally leads to better OSCAR performance.}
\label{fig:measure}
\end{center}
\end{figure*}

\section{Discussion}
\label{discussion}

In section \ref{sec:exp}, we demonstrated the performance of OSCAR with three different evaluation metrics: performance, time, and hit count.
In this section, we further explore the relationship between the characteristics of asset sets and OSCAR’s performance, providing insights into the underlying factors that influence its effectiveness.

In Problem (4), we approximated the original space \( P_K(\mathbb{R}^n) \) by transforming it into \( P_K(L_{\Sigma}^{T}\mathbb{R}^n) \).
Specifically, this transformation yields:
\[
P_K(L_\Sigma^T(\mathbb{R}^n)) = \{w \in \mathbb{R}^n \mid (L_\Sigma^Tw)_i = 0, \;\forall i \notin K\},
\]
implying that as \( L_{\Sigma}^{T} \) approaches a diagonal matrix, the resulting weights \( w \) increasingly resemble the optimal tangent portfolio \( \hat w \).
In the special case where \( L_{\Sigma}^{T} \) becomes a perfect diagonal matrix, the solution exactly mirrors \( \hat w \). 
This behavior of \( \Sigma \) affects the structure of the weight vector \( w \), which in turn directly impacts OSCAR’s performance.
Furthermore, as \( L_{\Sigma}^{T} \) more closely resembles a diagonal matrix, i.e., when the absolute values of \( L_{\Sigma}^{T} \)'s off-diagonal elements become significantly smaller than those of the diagonal elements, the corresponding covariance matrix \( \Sigma = L_{\Sigma} L_{\Sigma}^{T} \) will also exhibit off-diagonal elements with similarly small absolute values.

To examine this in more detail, we analyze how OSCAR’s performance correlates with the average ratio of diagonal to off-diagonal elements in the covariance matrices of different asset sets.
We devise a measure for this ratio as:
\[
\frac{\text{mean of diagonal elements of } |\Sigma|}{\text{mean of diagonal elements of } |\Sigma| + \text{mean of off-diagonal elements of } |\Sigma|}.
\]
We hypothesize that asset sets with higher values of the defined measure will exhibit improved performance with OSCAR. This is confirmed by Figure \ref{fig:measure}, which shows a strong positive correlation (0.8265) between the diagonal dominance of the covariance matrices and OSCAR’s performance. Specifically, the figure demonstrates that as the diagonal dominance measure increases, indicating lower interdependence among assets, OSCAR consistently achieves higher Sharpe ratios. For example, asset sets like FTSE100 and KOSPI200, which show higher diagonal dominance, consistently outperform others, highlighting the heuristic's effectiveness in less correlated environments.

The results suggest that OSCAR performs better when asset interdependence is lower, as indicated by the smaller off-diagonal elements in the covariance matrix. This relationship highlights that reduced interdependence among assets enhances the effectiveness of OSCAR’s asset selection heuristic.

\section{Conclusion}
\label{sec:conclusion}

In this paper, we propose a novel asset selection heuristic named OSCAR for sparse tangent portfolio selection. Given the expected return vector and covariance matrix, OSCAR solves the cardinality-constrained tangent portfolio selection problem by first optimizing the unconstrained version and deriving the optimal tangent portfolio \( \hat w \). Then, OSCAR selects asset candidates corresponding to the \( k \) largest components of \( |L_\Sigma \hat w| \), where \( L_\Sigma \) is the Cholesky decomposition of the covariance matrix.
Finally, OSCAR re-optimizes the portfolio optimization problem without cardinality constraint using the selected assets, yielding the constrained optimal portfolio $w^\ast$.
Our numerical experiments demonstrate that OSCAR performs with higher efficacy and efficiency compared to other heuristics.
We further validate our assumption regarding the conditions under which the OSCAR performs well by comparing the diagonal dominance of the covariance matrix and OSCAR's performance.
For future research, we aim to extend OSCAR to solve problems with convex constraints, such as no-short-selling constraints.
We anticipate that OSCAR can be further developed and become a fundamental methodology for solving sparse tangent portfolio selection problems.












\clearpage
\bibliography{ref}
\end{document}